\documentclass[11pt,onecolumn]{ieeetran}

\usepackage{amsmath,amssymb,amsthm}
\usepackage{mathtools}
\usepackage{enumitem}
\usepackage{hyperref}

\hypersetup{colorlinks=true,allcolors=blue}

\newcommand{\N}{\mathbb{N}}
\newcommand{\Q}{\mathbb{Q}}
\newcommand{\R}{\mathbb{R}}
\newcommand{\E}{\mathbb{E}}
\newcommand{\Prob}{\mathbb{P}}

\newtheorem{theorem}{Theorem}
\newtheorem{lemma}{Lemma}

\newtheorem{corollary}{Corollary}
\theoremstyle{definition}
\newtheorem{definition}{Definition}
\newtheorem{remark}{Remark}
\newcommand{\orcid}[1]{\href{https://orcid.org/#1}{ {\includegraphics[scale=0.5]{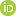}}}}
\title{Computability Limits of Sequential Hypothesis Testing}

\author{Amir Leshem\thanks{Amir Leshem is with Faculty of Engineering, Bar-Ilan University, 52900, Ramat-Gan, Israel. The research was partially supported by ISF grant 2197/22. amir.leshem@biu.ac.il}\textsuperscript{ 
  \orcid{0000-0002-2265-7463}}, ~\IEEEmembership{Fellow,~IEEE}
  }
\date{\today}

\begin{document}
\maketitle
\begin{abstract}
Sequential hypothesis testing asks for decision rules that update as data arrive. A natural goal is
\emph{eventual correctness}: the rule may change its mind early on, but it should make only finitely
many wrong decisions almost surely. Starting from Cover's theorem, which guarantees such behavior
for membership in a countable set of candidate means, we ask a sharper question: \emph{which sets
actually admit computable sequential decision procedures with finitely many errors?} We answer this
optimally by giving a complete characterization both necessary and sufficient of the subsets of
$\Q$ that admit a computable finite-error sequential membership test. We further extend the
characterization to any \emph{effectively presented} countable family of real means, exactly the
setting in which Cover's identification rule can be implemented computably. Beyond the technical
boundary, the results clarify within a precise probabilistic setting what it can mean for inquiry
to ``converge to the truth,'' and they formalize a limit to which empirical methods can be expected
to succeed when only eventual stabilization (rather than fixed-time guarantees) is demanded.

keywords: Cover's theorem, sequential decision procedures, finite error learning, limit computability,
$\Delta^0_2$ sets.
\end{abstract}
\section{Introduction}
In many scientific settings, hypotheses are not settled by a single finite experiment.
Rather, one adopts a working model and revises it when accumulating evidence forces a change.
This familiar ``Popperian'' picture is naturally idealized by a \emph{limit test}: a procedure that receives an increasing data record and may change its verdict, but is required to stabilize eventually (equivalently, it makes only finitely many mistakes along the realized data stream).
From a computability-theoretic perspective, such tests correspond to \emph{limit computation} (computation with finitely many mind changes).
In recursion theory this appears as ``trial-and-error'' or limiting computation \cite{putnam1965}, while in inductive inference it is studied as identification/learning in the limit 
\cite{gold1967,terwijn2006,jain1999}.
A classical theorem of Cover~\cite{Cover1973} concerns sequential decision procedures for the mean $\mu$ of an i.i.d.\ process with finite second moment.
Given any prescribed countable set $S=\{s_1,s_2,\dots\}\subseteq\R$, Cover constructs an explicit sequential rule which, with probability one, eventually identifies the correct hypothesis $\mu=s_i$ whenever $\mu\in S$, and otherwise eventually declares $\mu\notin S$ for all $\mu\notin S$ outside a Lebesgue-null exceptional set.
The case $S=\Q$ yields an ``irrationality test'' for the mean.

The purpose of this paper is to show that Cover's construction is closed under \emph{limit-computable} post-processing on the identified index.
Fix a computable enumeration $e:\N\to\Q$ and let $A\subseteq\Q$ be presented via its index set $I_A=\{i\in\N: e(i)\in A\}$.
If $I_A$ is limit computable (equivalently, $I_A\in\Delta^0_2$), then membership in $A$ can be decided by a computable limit test built from two ingredients:
(i) Cover's sequential identification of the rational index when $\mu\in\Q$, and
(ii) a limit-computable approximation of $I_A$ (by Shoenfield's limit lemma \cite{Shoenfield1993}).
Conversely, any computable limit test that succeeds for every rational mean induces a limit approximation to $I_A$, and hence forces $I_A$ to be limit computable.
In particular, these results show that, even when driven by random data, sequential procedures that are allowed finitely many revisions can decide exactly the limit-computable ($\Delta^0_2$) properties of a countable, effectively presented hypothesis class up to the unavoidable Lebesgue-null exceptional set in Cover's theorem for means outside the hypothesis class.

Finally, we note a methodological point about effectivity.
Although our procedures are phrased using real-valued statistics (sample means and variances), at each stage they require only finitely many rational comparisons (for instance, checking whether a candidate lies in a data-dependent open interval).
Thus the decision rules can be implemented from sufficiently accurate rational approximations of the observed values, without committing to a fully effective probability framework. Viewed as a learning problem, our results give an exact computability-theoretic boundary for when mean-membership hypotheses are learnable with almost-surely finitely many errors (i.e., eventual stabilization) by a computable sequential procedure.
\subsection*{Contribution and novelty.}
The main contribution of this paper is a sharp computational characterization of sequential hypothesis tests over countable mean hypotheses.
While Cover's theorem is often presented as a rationality test for the mean, our results isolate the exact computational strength of Cover-style sequential identification.
Specializing Cover's countable-set identification procedure to $S=\mathbb{Q}$ and composing it with a limit approximation (via the Shoenfield limit lemma), we show that every limit-computable property of the \emph{identified rational index} is limit decidable from the data.
Conversely, we prove that no computable limit test can decide membership in a set $A\subseteq\mathbb{Q}$ for all rational means unless the associated index set $I_A$ is itself limit computable (equivalently, $\Delta^0_2$).
Together, these results yield an ``if and only if'' boundary: for rational means, $A\subseteq\mathbb{Q}$ admits a computable Popper-style sequential test precisely when $I_A$ is limit computable.

\section{Problem formulation}

Let $X_1,X_2,\dots$ be independent and identically distributed real-valued random variables with
finite second moment and mean $\mu = \E[X_1]$.
A \emph{sequential test} is a sequence of functions
\[
F_n : \R^n \to \{0,1\},
\qquad n\ge 1,
\]
interpreted as provisional decisions based on the first $n$ observations.

\begin{definition}[Finite mistakes / stabilization]\label{def:finite-mistakes}
Fix a truth value $\theta\in\{0,1\}$. We say that a sequential test $(F_n)$
\emph{makes only finitely many mistakes almost surely} for parameter $\mu$
if
\[
\Prob_\mu\Big( F_n(X_1,\dots,X_n)\neq \theta \text{ for infinitely many }n\Big)=0.
\]
Equivalently, $\Prob_\mu\big(\lim_{n\to\infty}F_n(X_1,\dots,X_n)=\theta\big)=1$.
\end{definition}

\paragraph{The decision problem.}
Fix a subset $A\subseteq\Q$. We wish to decide whether $\mu\in A$.
Since $A\subseteq\Q$, the correct answer is automatically \emph{no} for irrational $\mu$.
Thus the main requirement is strong correctness for all rational means,
while allowing a measure-zero exceptional set of irrationals, as in Cover's theorem.

\paragraph{Effectivity.}
Cover's procedures are specified by explicit arithmetic operations on the observed samples
(sample means, sample variances, and comparisons to rational thresholds).
The decision rule uses only finitely many rational comparisons at each stage; hence, it can be implemented from sufficiently accurate rational approximations of the data.
\section{Effective enumerations and \texorpdfstring{$\Delta^0_2$}{Delta-2} subsets of \texorpdfstring{$\Q$}{Q}}

\begin{definition}[Computable enumeration of $\Q$]
A \emph{computable enumeration} of $\Q$ is a total computable \emph{one-to-one} function onto $\Q$, 
$e:\N\to\Q$. Write $q_i:=e(i)$.
\end{definition}

Given $A\subseteq\Q$ and a fixed enumeration $e$, define the \emph{index set}
\[
I_A := \{\, i\in\N : q_i \in A \,\} \subseteq \N.
\]

\begin{definition}[$\Delta^0_2$ subset of $\Q$ relative to an enumeration]\label{def:delta2Q}
Let $e:\N\to\Q$ be fixed.
A set $A\subseteq\Q$ is called \emph{$\Delta^0_2$ relative to $e$} if its index set $I_A\subseteq\N$
is a $\Delta^0_2$ set of integers (equivalently, both $I_A$ and its complement are $\Sigma^0_2$).
\end{definition}

The key computability-theoretic tool is the characterization of $\Delta^0_2$ sets as those decidable
\emph{in the limit}.

\begin{lemma}[Shoenfield Limit Lemma]\label{lem:shoenfield}
A set $B\subseteq\N$ is $\Delta^0_2$ iff there exists a total computable function
$a:\N\times\N\to\{0,1\}$ such that for every $i\in\N$,
\[
\mathbf{1}_B(i) \;=\; \lim_{s\to\infty} a(i,s),
\]
i.e.\ for each fixed $i$, the values $a(i,s)$ change only finitely often and then stabilize to $\mathbf{1}_B(i)$.
\end{lemma}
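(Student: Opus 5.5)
The plan is to prove the two directions separately, using the standard definitions. $B\in\Sigma^0_2$ means $i\in B \iff \exists x\,\forall y\, R(i,x,y)$ for a computable relation $R$, and $\Delta^0_2$ means both $B$ and $\overline{B}$ are $\Sigma^0_2$. By Post's theorem this is equivalent to $B$ being computable in the halting set $K$. I will use whichever form is more convenient in each direction and take Post's theorem as known background.

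\emph{($\Leftarrow$).} Suppose $a$ is total computable with $\mathbf{1}_B(i)=\lim_s a(i,s)$. Stabilization gives
\[
i\in B \iff \exists s_0\,\forall s\ge s_0\; a(i,s)=1 .
\]
The matrix is a computable relation of $(i,s_0,s)$, so this formula is $\Sigma^0_2$. The same formula with value $0$ in place of $1$ shows $\overline{B}\in\Sigma^0_2$. Hence $B\in\Delta^0_2$. This direction is routine.

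\emph{($\Rightarrow$).} Suppose $B\in\Delta^0_2$. By Post's theorem, $B\le_T K$, so fix an oracle Turing machine $\Phi$ with $\Phi^K=\mathbf{1}_B$. Let $K_s$ be the finite set of indices $e<s$ that halt within $s$ steps; the sets $K_s$ are uniformly computable and increase to $K$. Define $a(i,s)$ to be the output of $\Phi^{K_s}(i)$ run for $s$ steps if this halts with output in $\{0,1\}$, and $0$ otherwise. Then $a$ is total computable.

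To verify the limit, fix $i$. The true computation $\Phi^K(i)$ halts after some number $t$ of steps and queries only finitely many oracle positions, all below some $u$. Choose $s_1$ so that $K_{s}\cap[0,u)=K\cap[0,u)$ for all $s\ge s_1$; such an $s_1$ exists because only finitely many elements below $u$ ever enter $K$. For $s\ge\max(s_1,t)$, the computation $\Phi^{K_s}(i)$ sees exactly the same oracle answers as $\Phi^K(i)$ and has enough steps to finish, so it reproduces the true computation. Therefore $a(i,s)=\mathbf{1}_B(i)$ for all sufficiently large $s$.

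The main obstacle is this use-principle step, and in particular making sure that before stage $s_1$ an approximation which errs or diverges does not break totality; the step bound $s$ handles that. If I wanted to avoid oracles entirely, I would instead start from the $\Sigma^0_2$ definitions of $B$ and $\overline{B}$ and define $a(i,s)$ by comparing the least $x\le s$ whose $\forall y\le s$ check survives for each of the two formulas. However, the bookkeeping for ``the witness is eventually stable'' is messier in that version, so I will use the oracle argument.
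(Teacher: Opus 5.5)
Your proof is correct. The paper gives no proof of this lemma: it quotes it as the classical Shoenfield limit lemma and uses it as a black box, so there is no in-paper route to compare yours against.

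Your $(\Leftarrow)$ direction is the direct unfolding of stabilization into $\Sigma^0_2$ definitions of $B$ and $\overline{B}$. Your $(\Rightarrow)$ direction is the standard textbook argument, and each step holds:
\begin{itemize}
\item you approximate $K$ by the finite stages $K_s$;
\item you run $\Phi^{K_s}(i)$ with a step bound $s$, which keeps $a$ total;
\item you invoke the use principle: once $K_s$ agrees with $K$ below the use $u$ and $s\ge t$, the approximating computation reproduces the true one step by step.
\end{itemize}

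The one point worth flagging is that the real content of $(\Rightarrow)$ sits in the $\Delta^0_2$ level of Post's theorem, $B\in\Delta^0_2\iff B\le_T K$. That level is easy to prove directly if you want a self-contained argument. A $\Pi^0_1$ predicate $\forall y\,R(i,x,y)$ is decidable with oracle $K$, so $B$ and $\overline{B}$ are both c.e.\ in $K$, and hence $B\le_T K$.

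The oracle-free version you set aside is also less messy than you expect. Let $x^B_s(i)$ and $x^{\overline{B}}_s(i)$ be the least $x\le s$ surviving the check $\forall y\le s$ in the respective $\Sigma^0_2$ definitions, with value $s+1$ if none survives. Then set $a(i,s)=1$ iff $x^B_s(i)\le x^{\overline{B}}_s(i)$. This works because the least true witness on the correct side eventually becomes and stays the least survivor. Meanwhile every candidate up to that value on the false side is eventually refuted, so the comparison stabilizes to $\mathbf{1}_B(i)$.
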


\begin{remark}
\label{rem:presentation}
An  effective procedure defining $A$ must be given by some finite code.
In Theorem~\ref{thm:main} below, we assume access to an index for a computable approximation
$a(i,s)$ of the kind provided by Lemma~\ref{lem:shoenfield}.
Equivalently we may assume an effective $\Sigma^0_2/\Pi^0_2$ presentation and invoke Shoenfield's Lemma to obtain a uniformly computable approximation $a$)
\end{remark}

\section{Cover's countable-mean test specialized to $\Q$}

Cover~\cite{Cover1973} studies the hypothesis testing problem for a countable set $S=\{p_1,p_2,\dots\}\subseteq\R$:
\[
H_i:\ \mu = p_i\ (i\ge 1) \qquad \text{vs.} \qquad H_0:\ \mu\notin S,
\]
under the success criterion of Definition~\ref{def:finite-mistakes}.
He gives an explicit online decision rule based on open intervals around the sample mean and an
Occam-style complexity threshold.
We recall the main ingredients, specialized to $S=\Q$ and to a fixed computable enumeration $e(i)=q_i$.

\subsection{Decision variables}

Let
\[
\bar X_n := \frac{1}{n}\sum_{t=1}^n X_t,
\qquad
s_n^2 := \frac{1}{n}\sum_{t=1}^n (X_t-\bar X_n)^2
\]
denote the sample mean and (uncorrected) sample variance.

Fix a parameter $\alpha>0$ and define an interval radius
\begin{equation}\label{eq:delta}
\delta_n \;:=\;\max\left\{(1+\alpha)\sqrt{\frac{2 s_n^2 \log\log n}{n}},2^{-n}\right\}
\qquad (n\ge 3).
\end{equation}
The off-target $0$-output guarantee is stated under the usual nondegeneracy assumption
$\hbox{Var}(X_1)>0$ (equivalently $\mathbb{P}(X_1=\mu)<1$).
The auxiliary floor $2^{-n}$ in~\eqref{eq:delta} is added only to make the construction total when $s_n=0$ (the degenerate case), which is also used in the reduction for the necessity direction. Under $\hbox{Var}(X_1)>0$, we have $s_n\to\sigma>0$ almost surely, hence for all sufficiently large $n$ the LIL term dominates $2^{-n}$ and the floor does not affect the asymptotic behavior of the rule.
Cover \cite{Cover1973} shows (using the law of the iterated logarithm together with the almost sure consistency of $s_n^2$) that for each fixed $\mu$,
\begin{equation}\label{eq:lil-inclusion}
|\bar X_n-\mu|<\delta_n \quad \text{for all but finitely many }n,\ \text{almost surely.}
\end{equation}
Then he uses~\eqref{eq:lil-inclusion} to guarantee that the shrinking open interval
$(\bar X_n-\delta_n,\bar X_n+\delta_n)$ eventually contains the true mean $\mu$.

Next define the \emph{least-index rational in the interval}:
\begin{equation}\label{eq:i}
i(t,\delta) \;:=\; \min\{\, i\ge 1 : |q_i-t|<\delta \,\}.
\end{equation}
Since $\Q$ is dense and $\delta>0$, the set in~\eqref{eq:i} is nonempty, so $i(t,\delta)$ is always finite.
Because $e$ is computable and $(t,\delta)$ will be computable from the data,
$i(t,\delta)$ can be found effectively by searching $i=1,2,3,\dots$ until the first hit.

Finally, choose a \emph{decision schedule} and an increasing \emph{complexity threshold}.
A convenient explicit choice is:
\begin{equation}\label{eq:schedule}
n(j) := \lceil j^{p}\rceil,\qquad k_j := j,
\qquad j=1,2,3,\dots,
\end{equation}
for some fixed $p>4$ (Cover suggests $p>6$).

\subsection{Cover's decision rule (for \texorpdfstring{$S=\Q$}{S=Q})}

At decision time $n(j)$:
\begin{enumerate}[label=\textbf{Step \arabic*:},leftmargin=2.4em]
\item Compute $\bar X_{n(j)}$ and $s_{n(j)}^2$ from $X_1,\dots,X_{n(j)}$.
\item Form $\delta_{n(j)}$ as in~\eqref{eq:delta}.
\item Compute $i_j := i(\bar X_{n(j)},\delta_{n(j)})$ as in~\eqref{eq:i}.
\item Output the hypothesis index
\[
C_{n(j)} :=
\begin{cases}
i_j, & \text{if } i_j\le k_j,\\
0, & \text{if } i_j>k_j.
\end{cases}
\]
\end{enumerate}
For times $n(j)\le n < n(j+1)$, keep the output constant: $C_n := C_{n(j)}$.

\begin{theorem}[Cover's countable-mean theorem, specialized to $\ \Q$]\label{thm:coverQ}
Let $X_1,X_2,\dots$ be i.i.d.\ with finite second moment and mean $\mu$.
There exists a Lebesgue-null set $N_0\subseteq \R\setminus\Q$ such that the above procedure satisfies:
\begin{enumerate}[label=(\roman*)]
\item For every $\mu\in\Q$, the output $C_n$ eventually stabilizes almost surely on the correct index $i$
with $\mu=q_i$.
\item For every $\mu\in (\R\setminus\Q)\setminus N_0$, the output $C_n$ eventually stabilizes almost surely on $0$.
\end{enumerate}
In particular, $C_n$ makes only finitely many mistakes almost surely for each $\mu\in\Q$ and for each
$\mu\in(\R\setminus\Q)\setminus N_0$.
\end{theorem}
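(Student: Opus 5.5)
Both parts rest on the almost-sure inclusion \eqref{eq:lil-inclusion}, combined with Borel--Cantelli-type arguments along the sparse schedule $n(j)=\lceil j^p\rceil$.

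\textbf{Part (i).} Fix $\mu=q_{i^*}\in\Q$. In the degenerate case $\mathrm{Var}(X_1)=0$ we have $\bar X_n=\mu$ exactly, so $q_{i^*}$ lies in the interval. Any $q_i$ with $i<i^*$ satisfies $|q_i-\mu|>0$, hence eventually exceeds $\delta_n\ge 2^{-n}\to0$; so $i_j=i^*$ for large $j$.

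In the nondegenerate case, \eqref{eq:lil-inclusion} gives $|\bar X_{n(j)}-q_{i^*}|<\delta_{n(j)}$ for all large $j$ almost surely, so $i_j\le i^*$. The finitely many indices $i<i^*$ have $d:=\min_{i<i^*}|q_i-\mu|>0$. Since $\bar X_n\to\mu$ a.s.\ and $\delta_n\to0$ a.s.\ (because $s_n\to\sigma$), eventually every $q_i$ with $i<i^*$ lies outside the interval. Hence $i_j=i^*$ eventually, and once $k_j=j\ge i^*$ the output is $C_{n(j)}=i^*$. Since $C_n$ is constant between decision times, it stabilizes on $i^*$.

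\textbf{Part (ii).} The real work is choosing $N_0$. We want, for a.e.\ irrational $\mu$, that eventually $i_j>j$, i.e.\ no $q_i$ with $i\le j$ lies in $(\bar X_{n(j)}-\delta_{n(j)},\bar X_{n(j)}+\delta_{n(j)})$. On the a.s.\ event of \eqref{eq:lil-inclusion}, the interval is contained in $(\mu-2\delta_{n(j)},\mu+2\delta_{n(j)})$. Moreover, $\delta_{n(j)}\le \varepsilon_j:=c\sqrt{\log\log n(j)/n(j)}$ eventually for a suitable deterministic $c$; this uses $s_n\to\sigma$, and I would take $c=2(1+\alpha)\sqrt2\,\sigma+1$, or argue conditionally on $\sigma$. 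It therefore suffices that, for large $j$, $|\mu-q_i|\ge 2\varepsilon_j$ for all $i\le j$.

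So I define
\[
N_0:=(\R\setminus\Q)\cap\limsup_{j}\bigcup_{i\le j}\bigl(q_i-2\varepsilon_j,\,q_i+2\varepsilon_j\bigr).
\]
Its Lebesgue measure is controlled by $\sum_j 4j\varepsilon_j\lesssim\sum_j j\cdot j^{-p/2}\sqrt{\log\log j}<\infty$ whenever $p>4$. This explains the threshold $p>4$. Borel--Cantelli for Lebesgue measure (restricted to bounded windows if needed) then gives $\lambda(N_0)=0$.

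\textbf{Main obstacle.} $\varepsilon_j$ depends on the unknown $\sigma$, while $N_0$ should be one fixed set. I would handle this by taking the union over $\sigma\in\{1,2,3,\dots\}$, i.e.\ over $c\in\N$. A countable union of null sets is null, and for each actual distribution $\sigma$ is bounded by some integer $c$. With that, for $\mu\notin N_0\cup\Q$, almost surely eventually $i_j>k_j$, so $C_{n(j)}=0$ and $C_n$ stabilizes on $0$.

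The final sentence of the theorem follows directly from (i) and (ii).
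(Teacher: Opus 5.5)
Your proposal is correct and follows the same route as the paper's sketch in Section~\ref{sec:cover-sketch}. For rational means it uses the LIL inclusion \eqref{eq:lil-inclusion} together with $\delta_n\to0$ and the finitely many smaller indices; for irrational means it uses the measure bound \eqref{eq:measure-bound} and Borel--Cantelli over $\mu$, with summability holding exactly when $p>4$.

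Your deterministic envelope $\varepsilon_j=c\sqrt{\log\log n(j)/n(j)}$ and the union over integer $c$ are a worthwhile tightening. The sketch applies Borel--Cantelli to sets $E(k_j,c\,\delta_{n(j)})$ whose radius is random and $\sigma$-dependent, whereas your $N_0$ is a single fixed Lebesgue-null set, independent of the sample path and of the underlying distribution.
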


\subsection{Sketch of Cover's argument (why the null set appears)}\label{sec:cover-sketch}

For completeness, we record the two key ideas in Cover's proof.

\paragraph{Correctness when $\mu\in\Q$.}
Assume $\mu=q_{i^\ast}$.
By~\eqref{eq:lil-inclusion}, eventually $\mu$ lies in the open interval, so $i(\bar X_n,\delta_n)$ is well-defined
and (for large $n$) cannot exceed $i^\ast$.
Because $\delta_n\to 0$, eventually no rational $q_i$ with $i<i^\ast$ can lie in the interval, so
$i(\bar X_n,\delta_n)=i^\ast$ for all but finitely many $n$ almost surely.
Since $k_j\to\infty$, the threshold does not block the true index for large $j$.

\paragraph{Rejection when $\mu\notin\Q$.}
Fix $k\in\N$ and $\delta>0$ and consider the set of parameters for which a ``low-complexity'' rational lies nearby:
\[
E(k,\delta) := \{\mu\in\R : i(\mu,\delta)\le k\}.
\]
Because $i(\mu,\delta)\le k$ means that $\mu$ lies within $\delta$ of at least one of $q_1,\dots,q_k$,
a simple measure bound yields
\begin{equation}\label{eq:measure-bound}
\lambda(E(k,\delta)) \le 2k\delta,
\end{equation}
where $\lambda$ is Lebesgue measure.
Cover chooses the decision times $n(j)$ and thresholds $k_j$ so that
\begin{equation}\label{eq:borel-cantelli-sum}
\sum_{j=1}^\infty \lambda(E(k_j, c\,\delta_{n(j)})) < \infty
\end{equation}
for a suitable constant $c>0$; using~\eqref{eq:measure-bound} this follows from a condition of the form
$\sum_j k_j\delta_{n(j)}<\infty$.
By the Borel-Cantelli lemma, for almost every $\mu$ (i.e.\ outside a null set $N_0$) the event
$\mu\in E(k_j, c\,\delta_{n(j)})$ happens only finitely often.
Together with the LIL control $|\bar X_{n(j)}-\mu|\le \delta_{n(j)}$ eventually,
this implies $i(\bar X_{n(j)},\delta_{n(j)})>k_j$ for all large $j$, hence the procedure outputs $0$ eventually.

We refer to~\cite{Cover1973} for the full details and for specific admissible choices of $n(j)$ and $k_j$.
\begin{remark}[Cover's idealized baseline]\label{rem:cover-baseline}
Cover's construction and analysis are carried out under the usual probabilistic idealization that
real arithmetic and comparisons (e.g.\ forming $\bar X_n$ and testing membership in real intervals)
are available exactly. We take this as the baseline setting for stating the identification property.
Section~\ref{sec:readouts} then shows that the same decision scheme admits a fully computable
realization from finite-precision rational readouts, without changing the probabilistic conclusions
(up to the harmless inflation $\delta_n'=\delta_n+\eta_n$).
\end{remark}

\section{Computability via finite-precision readouts}\label{sec:readouts}

Cover's construction \cite{Cover1973} is formulated over real-valued samples and does not address
effectivity. 

To discuss computability without introducing representations of real numbers, we assume the data
arrive as finite-precision measurements: for each $i$ we observe a rational $\tilde X_i$ approximating
$X_i$, and at time $n$ the procedure receives $(\tilde X_1,\dots,\tilde X_n)$. This matches the usual
sequential-sampling paradigm (new samples over time, no revision of the past) and lets us treat the
test as a computable function on $\Q^n$. We now define what is necessary in order to obtain an effective version of Cover's test.

\subsection{Finite-precision observation model (fixed readouts)}
Let $(X_i)_{i\ge 1}$ be i.i.d.\ real-valued random variables with mean $\mu\in\R$.
Fix a computable error schedule $(\varepsilon_i)_{i\ge 1}$ with $\varepsilon_i\downarrow 0$
(e.g.\ $\varepsilon_i=2^{-i}$).

\begin{definition}[Fixed rational readouts]\label{def:fixed-readouts}
A \emph{finite-precision readout sequence} for $(X_i)$ with accuracy $(\varepsilon_i)$ is a rational
sequence $(\tilde X_i)_{i\ge 1}\subseteq\Q$ satisfying
\begin{equation}\label{eq:fixed-readout-bound}
|\tilde X_i-X_i|\ \le\ \varepsilon_i,\qquad \text{for all }i\ge 1.
\end{equation}
At time $n$, the procedure receives the rational vector $(\tilde X_1,\dots,\tilde X_n)\in\Q^n$.
\end{definition}

Write $\bar X_n=\frac1n\sum_{i=1}^n X_i$ and $\overline{\tilde X}_n=\frac1n\sum_{i=1}^n \tilde X_i$.
Then \eqref{eq:fixed-readout-bound} implies the deterministic mean bound
\begin{equation}\label{eq:mean-bound-fixed}
\bigl|\overline{\tilde X}_n-\bar X_n\bigr|
\ \le\ \frac1n\sum_{i=1}^n \varepsilon_i\ :=:\ \eta_n.
\end{equation}
We now record a convenient choice of the readout tolerances ensuring that the inflation
$\delta_n'=\delta_n+\eta_n$ preserves the summability conditions required in Cover's argument.
\begin{remark}[Choosing $(\varepsilon_i)$ and $\eta_n$ for $k_j=j$, $n(j)=\lceil j^p\rceil$]\label{rem:choose-eps-eta}
Assume fixed readouts $|\tilde X_i-X_i|\le \varepsilon_i$ with a computable tolerance schedule
$\varepsilon_i\downarrow 0$, and set
\[
\eta_n:=\frac1n\sum_{i=1}^n \varepsilon_i,
\qquad\text{so that}\qquad
|\tilde X_n-\bar X_n|\le \eta_n.
\]
We implement mean-threshold steps using $\delta_n'=\delta_n+\eta_n$. For the schedule
$k_j=j$ and $n(j)=\lceil j^p\rceil$ with $p>4$, it suffices that
\[
\sum_{j=1}^\infty j\,\eta_{n(j)}<\infty
\quad\text{in addition to}\quad
\sum_{j=1}^\infty j\,\delta_{n(j)}<\infty.
\]
A simple explicit choice is $\varepsilon_i=2^{-i}$ (or $\varepsilon_i=i^{-2}$), for which
$\eta_n\le C/n$. Hence
\[
\sum_{j=1}^\infty j\,\eta_{n(j)}
\ \le\ C\sum_{j=1}^\infty \frac{j}{n(j)}
\ \le\ C\sum_{j=1}^\infty j^{1-p}
\ <\ \infty
\qquad(\text{since }p>4).
\]
Thus the readout inflation does not affect the validity of the Borel-Cantelli argument.
\end{remark}

\begin{remark}[No rereading and finite memory]\label{rem:no-rereading}
In this model each sample is observed once, at finite precision, and never updated:
nothing about the past changes. The sequential test operates on the discrete stream
$\tilde X_1,\tilde X_2,\dots$ and can be implemented with finite memory (e.g.\ by maintaining the
running rational mean $\overline{\tilde X}_n$ and the current candidate index), without storing
the entire history.
\end{remark}

\subsection{Computable sequential tests (discrete notion)}
\begin{definition}[Computable sequential test from readouts]\label{def:computable-test}
A \emph{computable sequential test} is a sequence of (classical) computable functions
\[
F_n:\Q^n\to\{0,1\}\qquad(n\ge 1).
\]
Given fixed rational readouts $(\tilde X_i)_{i\ge 1}$ as in Definition~4, the realized outputs are
$F_n(\tilde X_1,\dots,\tilde X_n)$.
\end{definition}

\subsection{A robustness lemma for interval-based decisions}
The procedures in this paper ultimately decide membership by locating the sample mean in
shrinking open intervals around candidate points. The next lemma shows that such decisions are
stable under \emph{readout} error once the mean is separated from the interval boundary by more
than $\eta_n$.
\begin{lemma}[Readout stability for open intervals]\label{lem:readout-stability}
Let $I=(a,b)$ be an open interval. Assume the fixed readout model of Definition~4, so that
\eqref{eq:mean-bound-fixed} holds. If
\[
\min\{|\bar X_n-a|,\ |\bar X_n-b|\}>\eta_n,
\]
then
\[
\mathbf{1}\{\tilde X_n\in I\}=\mathbf{1}\{\bar X_n\in I\}.
\]
\end{lemma}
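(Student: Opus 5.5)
The plan is to argue directly and deterministically, treating $\tilde X_n$ as shorthand for the readout mean $\overline{\tilde X}_n$, as in Remark~\ref{rem:choose-eps-eta}. The only input is the mean bound \eqref{eq:mean-bound-fixed}, $|\overline{\tilde X}_n-\bar X_n|\le\eta_n$, which holds for every realization under the fixed readout model of Definition~\ref{def:fixed-readouts}. Write $x=\bar X_n$, $\tilde x=\overline{\tilde X}_n$ and $\eta=\eta_n$. We are then given $|\tilde x-x|\le\eta$ together with the hypothesis $|x-a|>\eta$ and $|x-b|>\eta$. We must show $x\in(a,b)\iff\tilde x\in(a,b)$.

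The idea is that $\tilde x$ lies in the closed interval $[x-\eta,x+\eta]$, and this interval contains neither endpoint $a$ nor $b$, because each endpoint is at distance strictly greater than $\eta$ from $x$. Since $[x-\eta,x+\eta]$ is connected and avoids both boundary points of $I$, it lies entirely inside $I$ or entirely inside $\R\setminus[a,b]$. The point $x$ itself is not an endpoint, so $x\notin\{a,b\}$.

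I would check the two cases explicitly rather than appeal to connectedness.
\begin{itemize}
\item If $x\in(a,b)$, then $x-a>\eta$ and $b-x>\eta$. Hence $\tilde x\ge x-\eta>a$ and $\tilde x\le x+\eta<b$, so $\tilde x\in I$.
\item If $x\notin(a,b)$, then $x\le a$ or $x\ge b$, and since $x\neq a,b$ the inequality is strict. If $x<a$, then $a-x>\eta$, so $\tilde x\le x+\eta<a$. The subcase $x>b$ is symmetric: $\tilde x\ge x-\eta>b$. In either subcase $\tilde x\notin I$.
\end{itemize}
Combining the two cases gives equality of the indicators. Unbounded intervals ($a=-\infty$ or $b=\infty$) are handled by simply dropping the corresponding constraint.

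There is no real obstacle here. The one point needing care is that the strict inequality in the hypothesis, combined with the non-strict readout bound, yields strict separation from the endpoints, so $\tilde x$ can never land exactly on $a$ or $b$. I would also remark that the lemma is a pathwise statement: no probability is involved, and it holds for every $n$ at which the separation condition is met.
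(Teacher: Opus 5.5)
Your proof is correct and follows the same route as the paper. The paper argues in one line that, by \eqref{eq:mean-bound-fixed}, a perturbation of size at most $\eta_n$ cannot carry $\bar X_n$ across an endpoint lying more than $\eta_n$ away; your two-case check makes this explicit, and your reading of $\tilde X_n$ as the readout mean $\overline{\tilde X}_n$ is the one the paper's proof relies on, while your point that the strict inequality keeps $\overline{\tilde X}_n$ off the endpoints sharpens the paper's looser phrase ``at least $\eta_n$ away''.
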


\begin{proof}
By \eqref{eq:mean-bound-fixed}, $|\tilde X_n-\bar X_n|\le \eta_n$. If
$\min\{|\bar X_n-a|,|\bar X_n-b|\}>\eta_n$, then $\bar X_n$ lies at least $\eta_n$ away from both
endpoints, so a perturbation of size at most $\eta_n$ cannot move it across the boundary of $I$.
\end{proof}

\subsection{Readout implementation of Cover-type rules}
Whenever an argument uses a threshold $\delta_n$ for the sample mean, we implement it from readouts
by inflating the threshold to absorb readout error:
\begin{equation}\label{eq:delta-prime}
\delta_n'\ :=\ \delta_n+\eta_n.
\end{equation}
Then the implication
\[
|\bar X_n-q|<\delta_n\ \Longrightarrow\ |\overline{\tilde X}_n-q|<\delta_n'
\]
holds deterministically by \eqref{eq:mean-bound-fixed}. Thus every step of the Cover/Cover-type index
selection based on conditions of the form $|\bar X_n-q|<\delta_n$ (or $\bar X_n\in(q-\delta_n,q+\delta_n)$)
admits a computable readout implementation obtained by replacing $\bar X_n$ with $\overline{\tilde X}_n$
and $\delta_n$ with $\delta_n'$.

\begin{remark}[Null sets and summability conditions]\label{rem:summability-delta-prime}
In the Borel-Cantelli part of Cover's analysis one typically requires a summability condition of the
form $\sum_j k_j\,\delta_{n(j)}<\infty$ (up to constants), using bounds such as
$\lambda(E(k,\delta))\le 2k\delta$ for suitable exceptional sets $E(k,\delta)$.
Under the readout model, the same argument goes through with $\delta_n$ replaced by
$\delta'_n=\delta_n+\eta_n$. In particular, it suffices to choose schedules $(n(j),k_j)$ and
$(\delta_n)$ so that $\sum_j k_j\,\delta'_{n(j)}<\infty$; this is automatic, for example, if
$\sum_j k_j\,\delta_{n(j)}<\infty$ and  $\sum_j k_j\,\eta_{n(j)}<\infty$.
\end{remark}

\subsection{Effect on the main theorems}
All subsequent constructions define tests by composing an index selector (Cover-type rule) with a
$\Delta^0_2$ approximation $a(i,s)$ (e.g.\ via Shoenfield). Under
Definition~\ref{def:computable-test}, these tests are plainly computable because they are maps
$\Q^n\to\{0,1\}$ applied to the rational readout inputs. The correctness proofs are unchanged
except for the uniform replacement $\delta_n\mapsto\delta_n'$ as in \eqref{eq:delta-prime}, justified
by Lemma~\ref{lem:readout-stability}.
In particular, the necessity direction (extracting a $\Delta^0_2$ approximation by feeding constant
rational data) becomes entirely discrete: one evaluates $F_n(q,\dots,q)$ on exact rational inputs.

\begin{remark}[Convention: suppressing the readout map]\label{rem:readout-convention}
The probabilistic model is defined in terms of the real-valued i.i.d.\ samples $(X_i)$ with mean
$\mu$, while the decision rules are computed from the rational readouts $(\tilde X_i)$ of
Definition~4 satisfying \eqref{eq:fixed-readout-bound} (and hence \eqref{eq:mean-bound-fixed}).
To avoid notational clutter, we henceforth suppress the readout map in the arguments of computable
functions and write
\[
F_n(X_1,\dots,X_n)\quad\text{as shorthand for}\quad F_n(\tilde X_1,\dots,\tilde X_n),
\]
and similarly for $C_n$.
\end{remark}

\section{Main result: \texorpdfstring{$\Delta^0_2$}{Delta-2} subsets of \texorpdfstring{$\Q$}{Q}}

We now lift Theorem~\ref{thm:coverQ} from the set $\Q$ itself to arbitrary $\Delta^0_2$ subsets of $\Q$. This is interesting, since it implies that given a Shoenfield representation of a set $A\subseteq \Q$ we can computably design a sequence of decisions converging to the right decision even for more complex sets in the Turing Hierarchy.
\begin{theorem}[Main Theorem]\label{thm:main}
Fix a computable one-to-one enumeration $e:\N\to\Q$ and a set $A\subseteq\Q$ whose index set $I_A\subseteq\N$ is $\Delta^0_2$.
Let $a:\N\times\N\to\{0,1\}$ be a total computable approximation such that
$\mathbf{1}_{I_A}(i)=\lim_{s\to\infty} a(i,s)$ for all $i$ (as in Lemma~\ref{lem:shoenfield}).

Then there exists a computable sequential test $(F_n)$ such that:
\begin{enumerate}[label=(\roman*)]
\item For every $\mu\in\Q$,
\[
\Prob_\mu\!\Big(\lim_{n\to\infty} F_n(X_1,\dots,X_n)=\mathbf{1}_A(\mu)\Big)=1.
\]
\item For every $\mu\in(\R\setminus\Q)\setminus N_0$ (where $N_0$ is as in Theorem~\ref{thm:coverQ}),
\[
\Prob_\mu\!\Big(\lim_{n\to\infty} F_n(X_1,\dots,X_n)=0\Big)=1.
\]
\end{enumerate}
Equivalently, the test makes only finitely many mistakes almost surely for every rational mean and for every
irrational mean outside $N_0$.
\end{theorem}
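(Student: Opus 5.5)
The plan is to compose Cover's index selector $C_n$ from Theorem~\ref{thm:coverQ} with the limit approximation $a$ of Lemma~\ref{lem:shoenfield}. Concretely, define
\[
F_n(X_1,\dots,X_n) :=
\begin{cases}
a(C_n, n), & \text{if } C_n\ge 1,\\
0, & \text{if } C_n = 0,
\end{cases}
\]
where $C_n$ is computed from the rational readouts with the inflated radius $\delta_n'$ of \eqref{eq:delta-prime}.

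The first step is to check computability. At time $n$ the readouts are rationals, so $\overline{\tilde X}_{n(j)}$ is rational. The quantity $\delta'_{n(j)}$ involves a square root and $\log\log$, but it can be replaced by a computable rational upper approximation within a factor $2$ without harming \eqref{eq:lil-inclusion} or the summability in Remark~\ref{rem:summability-delta-prime}. The search for $i_j$ need only run up to $k_j$: we only need to know whether some $i\le k_j$ satisfies $|q_i-\overline{\tilde X}_{n(j)}|<\delta'_{n(j)}$, and if so, which is least. This is a bounded search with finitely many rational comparisons, so it terminates. Since $a$ is total computable, $F_n$ is a total computable map $\Q^n\to\{0,1\}$.

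The second step is case (i). Let $\mu=q_{i^\ast}\in\Q$. By Theorem~\ref{thm:coverQ}(i), in its readout form justified in Section~\ref{sec:readouts}, there is an almost sure event $\Omega_0$ on which $C_n=i^\ast$ for all $n\ge N(\omega)$. On $\Omega_0$, for $n\ge N(\omega)$ we have $F_n=a(i^\ast,n)$. By Lemma~\ref{lem:shoenfield}, $a(i^\ast,s)$ stabilizes to $\mathbf{1}_{I_A}(i^\ast)=\mathbf{1}_A(\mu)$ for $s\ge s_0(i^\ast)$, where $s_0$ is deterministic. Hence $F_n=\mathbf{1}_A(\mu)$ for all $n\ge\max\{N(\omega),s_0\}$. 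The key point is that the second argument of $a$ is the time $n\to\infty$, while the first argument is eventually frozen. Freezing the index first and then letting $s$ grow is what makes the composition converge.

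The third step is case (ii). For $\mu\notin\Q\cup N_0$, Theorem~\ref{thm:coverQ}(ii) gives $C_n=0$ eventually almost surely, so $F_n=0$ eventually, which is correct because $A\subseteq\Q$.

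The main obstacle is not the composition but the readout version of Theorem~\ref{thm:coverQ}. One must check that the inflated, rationally approximated radius still eventually contains $\mu$. This holds because $\delta_n'\ge\delta_n$ and $|\overline{\tilde X}_n-\bar X_n|\le\eta_n$. One must also check that the Borel--Cantelli sum $\sum_j j\,\delta'_{n(j)}<\infty$ still holds, so that the same null set $N_0$, or one that is still null, works. I would handle this by invoking Remark~\ref{rem:choose-eps-eta} and Remark~\ref{rem:summability-delta-prime}. I would also note that a constant-factor rational over-approximation of $\delta_n$ preserves both properties. The correctness argument for rational $\mu$ additionally needs $\delta'_n\to0$ almost surely, which follows since $\eta_n\le C/n$.
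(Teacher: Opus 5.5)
Your proposal is correct and follows the paper's proof: it uses the same composition, $F_n=a(C_n,n)$ when $C_n\ge 1$ and $F_n=0$ when $C_n=0$, with the same two-case argument via Theorem~\ref{thm:coverQ}(i)--(ii) and the stabilization of $a(i^\ast,\cdot)$ once the index is frozen. Your extra computability details (bounded search up to $k_j$, and a rational over-approximation of $\delta'_n$) only make explicit what the paper leaves implicit in Section~\ref{sec:readouts}.
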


\begin{proof}
Run Cover's procedure for $S=\Q$ on the sample stream (Theorem~\ref{thm:coverQ}).
Let $C_n\in\{0,1,2,\dots\}$ denote its stage-$n$ output, where $C_n=0$ means ``$\mu\notin\Q$'' and $C_n=i\ge 1$
means ``$\mu=q_i$''.

Define
\[
F_n(x_1,\dots,x_n) \;:=\;
\begin{cases}
0, & C_n(x_1,\dots,x_n)=0,\\[4pt]
a(C_n(x_1,\dots,x_n),\,n), & C_n(x_1,\dots,x_n)\ge 1.
\end{cases}
\]

\smallskip\noindent
\emph{Case 1: $\mu\in\Q$.}
Then $\mu=q_{i^\ast}$ for some index $i^\ast$.
By Theorem~\ref{thm:coverQ}(i), with probability one there exists $N$ such that for all $n\ge N$,
$C_n(X_1,\dots,X_n)=i^\ast$.
On that event,
\[
F_n(X_1,\dots,X_n) = a(i^\ast,n) \quad\text{for all }n\ge N.
\]
Since $a(i^\ast,n)$ stabilizes to $\mathbf{1}_{I_A}(i^\ast)=\mathbf{1}_A(\mu)$, $F_n$ stabilizes to the correct value.

\smallskip\noindent
\emph{Case 2: $\mu\notin\Q$ and $\mu\notin N_0$.}
By Theorem~\ref{thm:coverQ}(ii), with probability one there exists $N$ such that for all $n\ge N$,
$C_n(X_1,\dots,X_n)=0$.
Hence $F_n(X_1,\dots,X_n)=0$ for all $n\ge N$ and $F_n$ stabilizes to $0$.
Since $A\subseteq\Q$, this is correct.

Thus (i)-(ii) hold.
\end{proof}

\begin{corollary}[Computable subsets of $\Q$]\label{cor:computable}
If $A\subseteq\Q$ is computable (decidable) relative to the enumeration $e$, then the conclusion of
Theorem~\ref{thm:main} holds with $a(i,s)\equiv \mathbf{1}_{I_A}(i)$ and no appeal to Lemma~\ref{lem:shoenfield}.
\end{corollary}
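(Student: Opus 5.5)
The plan is to specialize the proof of Theorem~\ref{thm:main} to a decidable index set, taking the constant approximation $a(i,s):=\mathbf{1}_{I_A}(i)$. First we check that this $a$ meets the hypotheses of Theorem~\ref{thm:main}. Since $A$ is computable relative to $e$, the characteristic function $\mathbf{1}_{I_A}:\N\to\{0,1\}$ is total computable. Hence $a:\N\times\N\to\{0,1\}$, which ignores its second argument, is total computable. For each $i$ the sequence $s\mapsto a(i,s)$ is constant, so it trivially stabilizes and $\lim_{s\to\infty}a(i,s)=\mathbf{1}_{I_A}(i)$. This gives exactly the approximation required in the statement of Theorem~\ref{thm:main}, with no use of Lemma~\ref{lem:shoenfield}. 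Note also that any computable set is $\Delta^0_2$, so the standing hypothesis of the theorem is met.

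Next we write out the resulting test: $F_n=0$ if $C_n=0$, and $F_n=\mathbf{1}_{I_A}(C_n)$ if $C_n\ge 1$. This is computable as a map $\Q^n\to\{0,1\}$ on readouts. $C_n$ is computable by Section~\ref{sec:readouts}: it needs finitely many rational comparisons and a bounded search with the threshold $k_j$. It is then composed with the decision procedure for $I_A$.

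Finally we rerun the two cases of the main proof, which now simplify. For $\mu=q_{i^\ast}$, Theorem~\ref{thm:coverQ}(i) gives $C_n=i^\ast$ eventually almost surely, so $F_n=\mathbf{1}_{I_A}(i^\ast)=\mathbf{1}_A(\mu)$ from that time on. Here no second stabilization step is needed. For irrational $\mu\notin N_0$, Theorem~\ref{thm:coverQ}(ii) gives $C_n=0$ eventually, so $F_n=0$ eventually.

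There is no real obstacle. The only point needing care is confirming that decidability of $A$ is meant relative to $e$, i.e.\ that $I_A$ itself is decidable. Under the alternative reading, where $A$ is decidable as a set of rationals given in lowest-terms notation, we would note that $i\mapsto q_i$ is computable. Then $I_A=\{i: e(i)\in A\}$ is decidable by composition.
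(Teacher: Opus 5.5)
Your proposal is correct and matches the paper's intended argument: the corollary is Theorem~\ref{thm:main} instantiated with the constant, total computable approximation $a(i,s)=\mathbf{1}_{I_A}(i)$, which converges trivially. The test then reduces to $F_n=\mathbf{1}_{I_A}(C_n)$ whenever $C_n\ge 1$, so the only stabilization needed is that of Cover's index $C_n$.
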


\begin{theorem}[Necessity: limit tests force $\Delta^0_2$ index sets]\label{thm:necessity}
Fix a computable one-to-one enumeration $e:\N\to\Q$ and let $A\subseteq\Q$ with index set
$I_A=\{i\in\N: e(i)\in A\}$.
Suppose there exists a computable sequential test $(F_n)$ such that for every rational $q\in\Q$,
whenever $(X_n)_{n\ge 1}$ is i.i.d.\ with $\E[X_1]=q$,
\[
\Prob_q\!\Big(\lim_{n\to\infty} F_n(X_1,\dots,X_n)=\mathbf{1}_A(q)\Big)=1.
\]
Then $I_A$ is limit computable; in particular, $I_A\in\Delta^0_2$.
Consequently, if $I_A\notin\Delta^0_2$, no such computable limit test exists.
\end{theorem}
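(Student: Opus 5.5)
The plan is to extract a Shoenfield-style approximation directly from the test by feeding it deterministic constant data, as anticipated in the discussion of readouts above. For a rational $q$, take $X_1,X_2,\dots$ to be the degenerate i.i.d.\ sequence $X_n\equiv q$. This has finite second moment and mean $\E[X_1]=q$, so the hypothesis of the theorem applies. Here we use the fact that the hypothesis is stated for \emph{every} i.i.d.\ law with mean $q$, not only nondegenerate ones. The readouts can be taken exact, $\tilde X_n=q\in\Q$, so the realized outputs are the deterministic values $F_n(q,\dots,q)$.

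The probability space then carries a single sample path. Hence
\[
\Prob_q\Big(\lim_{n\to\infty}F_n(q,\dots,q)=\mathbf{1}_A(q)\Big)=1
\]
simply says that the deterministic sequence $F_n(q,\dots,q)$ converges to $\mathbf{1}_A(q)$. Since it takes values in $\{0,1\}$, convergence means it changes value only finitely often.

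Next I would define
\[
a(i,s):=F_{s+1}\big(e(i),\dots,e(i)\big),
\]
with $s+1$ copies of $e(i)$, indexed so that $s\ge 0$ gives $n\ge 1$. This function is total computable. Compute $e(i)\in\Q$ by computability of $e$, form the rational vector of length $s+1$, and apply $F_{s+1}$. For this step I need the family $(F_n)$ to be \emph{uniformly} computable, i.e.\ $(n,x)\mapsto F_n(x)$ is a single computable function on $\bigcup_n\Q^n$. I would state this as the intended reading of Definition~\ref{def:computable-test}: a sequential test is one algorithm receiving a growing stream, as in Remark~\ref{rem:no-rereading}. This uniformity is the main point needing care. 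If each $F_n$ were merely computable separately, with no uniform index, the map $(i,s)\mapsto a(i,s)$ need not be computable, and the argument would have to assume the uniform reading explicitly.

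By the previous paragraph, for each $i$ we get $\lim_s a(i,s)=\mathbf{1}_A(e(i))=\mathbf{1}_{I_A}(i)$. So $I_A$ is limit computable, and by Lemma~\ref{lem:shoenfield} it lies in $\Delta^0_2$. The final sentence of the theorem is the contrapositive.

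One subtlety remains. If one insisted on the nondegeneracy assumption $\mathrm{Var}(X_1)>0$, the constant stream would be unavailable. The statement as given quantifies over all i.i.d.\ laws with mean $q$, so I would point out explicitly that degenerate laws are included. This matches the paper's earlier remark that the $2^{-n}$ floor is kept precisely for this necessity reduction.
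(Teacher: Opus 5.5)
Your proposal is correct and follows the paper's proof essentially verbatim: feed the degenerate constant stream $X_n\equiv e(i)$ with exact rational readouts, set $a(i,n)=F_n(e(i),\dots,e(i))$, note that the deterministic $\{0,1\}$-valued sequence must converge to $\mathbf{1}_{I_A}(i)$, and invoke Lemma~\ref{lem:shoenfield}. Your explicit remark that $(F_n)$ must be uniformly computable in $n$ is needed for $(i,n)\mapsto a(i,n)$ to be computable, and the paper only uses it tacitly, so stating it is an improvement.
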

\begin{proof}
Fix $i\in\N$ and write $q=e(i)$.
Consider the degenerate i.i.d.\ process $X_n\equiv q$ (so $\E[X_1]=q$ and $\Prob_q$ is supported on the constant sample path).
By hypothesis, along this process the sequence $F_n(q,\dots,q)$ stabilizes almost surely to $\mathbf{1}_A(q)$.
Define
\[
a(i,n)\;:=\;F_n(\underbrace{q,\dots,q}_{n\ \text{times}}).
\]
Since $e$ and $(F_n)$ are computable, the map $(i,n)\mapsto a(i,n)$ is total computable, and
\(\lim_{n\to\infty} a(i,n)=\mathbf{1}_A(e(i))=\mathbf{1}_{I_A}(i)\).
Thus $I_A$ is limit computable.
By Lemma~\ref{lem:shoenfield}, this implies $I_A\in\Delta^0_2$.
The final claim is the contrapositive.
\end{proof}
\noindent
Combining Theorems~\ref{thm:main}, and \ref{thm:necessity} yields the announced characterization: for rational means, computable limit-testability of $A\subseteq\mathbb{Q}$ is equivalent to limit computability of $I_A$.
\begin{remark}[Exceptional sets]\label{rem:exceptional-sets}
Cover-style \emph{finite-error} sequential rules typically come with two qualitatively different guarantees:
\begin{itemize}
\item \emph{On-target correctness:} for every parameter in the target family one has eventual correctness
(almost surely, with only finitely many errors).
\item \emph{Off-target correctness:} outside the target family one can usually guarantee eventual correctness
only up to a Lebesgue-null exceptional set.
\end{itemize}
In particular, already in Cover's original construction (and hence already for $S=\Q$),
the induced irrationality test is eventually correct for all $\mu\in\Q$, and it is also eventually correct for
$\mu\in\R\setminus\Q$ \emph{except possibly} on a null set $N_0\subseteq \R\setminus\Q$; see~\cite{Cover1973}.
Equivalently, one achieves finite-error discrimination between $\Q$ and a full-measure subset of $\R\setminus\Q$,
but not necessarily all irrationals.

Our $\Delta^0_2$ lift does \emph{not} worsen this phenomenon.
The additional $\Delta^0_2$ layer is a purely \emph{effective} post-processing of the stabilized index produced by
Cover's step on $\Q$ (via the given $\Delta^0_2$/Shoenfield representation of $A$).
Thus, for every rational mean $\mu\in\Q$ the decision is eventually correct (hence, in particular, for all $\mu\in A$
and all $\mu\in\Q\setminus A$), and for irrational means any possible failure can occur only on the \emph{same}
null exceptional set $N_0$ already present in the underlying Cover identification step.
In this sense, the $\Delta^0_2$ lift is computably sharp: it preserves Cover's finite-error guarantees while extending
them uniformly to all $\Delta^0_2$ subsets of $\Q$ under an explicit effective representation.
\end{remark}

\begin{remark}[A concrete example]
Let $K\subseteq\N$ be the halting set. Then $K$ is $\Delta^0_2$ (actually, $\Sigma^0_1$) and the set
$A:=\{q_i : i\in K\}\subseteq\Q$ is $\Delta^0_2$ relative to $e$.
Theorem~\ref{thm:main} yields a sequential procedure which, when the mean happens to be rational,
eventually answers whether its \emph{index} lies in $K$.
In other words, probabilistic identification of the rational mean can be composed with highly nontrivial
limit computations on the associated index.
\end{remark}

\section{From $\Q$ to effectively enumerated sets of computable reals}

Cover's construction \cite{Cover1973} is formulated for an arbitrary countable set
$S=\{s_1,s_2,\dots\}\subseteq\R$.
To regard the resulting decision rule as \emph{computable} for general $S$, one must specify an effective
presentation of the reals $s_j$ and an effective analogue of the ``least index in a shrinking interval'' step.
For $\Q$ this is immediate from any computable enumeration.

\paragraph{Effective enumeration via computable Cauchy bounds.}
We assume that $S$ is given by total computable functions
$L,U:\N\times\N\to\Q$ such that for all $j,m\ge 1$,
\begin{equation}\label{eq:cauchy-bounds}
L(j,m)\le s_j\le U(j,m)\quad\text{and}\quad U(j,m)-L(j,m)\le 2^{-m}.
\end{equation}
(Equivalently, each $s_j$ is given by a computable Cauchy name, uniformly in $j$.)
Using open intervals, define for rationals $x$ and $\delta>0$ the \emph{certified inclusion} predicate
\[
\mathrm{In}(j;x,\delta)\;:\Longleftrightarrow\;\exists m\ \bigl(x-\delta<L(j,m)\ \wedge\ U(j,m)<x+\delta\bigr).
\]
If $\mathrm{In}(j;x,\delta)$ holds then $s_j\in(x-\delta,x+\delta)$.
For a stage parameter $n$ we also define the bounded stage-$n$ version
\[
\mathrm{In}_n(j;x,\delta)\;:\Longleftrightarrow\;\exists m\le n\ \bigl(x-\delta<L(j,m)\ \wedge\ U(j,m)<x+\delta\bigr),
\]
which is decidable uniformly in $(j,n,x,\delta)$.

\paragraph{A computable bounded ``least index'' subroutine.}
Given integers $k,n\ge 1$ and rationals $x,\delta>0$, define
\begin{equation}\label{eq:iSk}
i^{(n)}_{S,k}(x,\delta)\;=\;\min\{\,1\le j\le k:\ \mathrm{In}_n(j;x,\delta)\,\},
\end{equation}
with the convention that $i^{(n)}_{S,k}(x,\delta)=0$ if no such $j\le k$ exists.
Then $i^{(n)}_{S,k}$ is total computable.
Moreover, if $s_j\in(x-\delta,x+\delta)$ and the inclusion is strict, then for all sufficiently large $n$ we have
$\mathrm{In}_n(j;x,\delta)$.

\medskip
The following theorem records that once one has a computable identification rule for $S$ (in the sense of Cover),
any limit-computable labeling of the indices yields a computable Popper-style test by the same Shoenfield
composition as in the rational case.
\begin{lemma}\label{lem:iSk-computable}
For each $k,n\ge 1$ and rationals $x,\delta>0$, the predicate $\mathrm{In}_n(j;x,\delta)$ is decidable uniformly
in $(j,n,x,\delta)$, and the map $i^{(n)}_{S,k}(x,\delta)\in\{0,1,\dots,k\}$ defined in \eqref{eq:iSk} is total
computable.
Moreover, if $s_j\in(x-\delta,x+\delta)$ with strict inclusion, then $\mathrm{In}_n(j;x,\delta)$ holds for all
sufficiently large $n$.
\end{lemma}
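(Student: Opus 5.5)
The plan is to verify the three claims of Lemma~\ref{lem:iSk-computable} in turn, directly from the effective presentation \eqref{eq:cauchy-bounds}. For decidability of $\mathrm{In}_n(j;x,\delta)$, I will note that $L$ and $U$ are total computable with rational values. Given $(j,n,x,\delta)$, we compute the finitely many rationals $L(j,m),U(j,m)$ for $m=1,\dots,n$. We then check the two strict inequalities $x-\delta<L(j,m)$ and $U(j,m)<x+\delta$ for each such $m$, using exact rational arithmetic. A bounded existential quantifier over a decidable matrix is decidable, and the procedure is uniform in all four inputs, since it is a single algorithm taking $(j,n,x,\delta)$ as input.

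Computability of $i^{(n)}_{S,k}(x,\delta)$ then follows at once. We evaluate the decidable predicate $\mathrm{In}_n(j;x,\delta)$ for $j=1,\dots,k$ in order, return the first $j$ for which it holds, and return $0$ if none does. This is a bounded search, so it always terminates; hence the map is total computable with values in $\{0,1,\dots,k\}$.

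For the ``moreover'' clause, suppose $s_j\in(x-\delta,x+\delta)$ and set $\gamma:=\min\{s_j-(x-\delta),\,(x+\delta)-s_j\}>0$. Choose $m_0$ with $2^{-m_0}<\gamma$. By \eqref{eq:cauchy-bounds} we have $L(j,m_0)\le s_j\le U(j,m_0)$ and $U(j,m_0)-L(j,m_0)\le 2^{-m_0}$. This gives $L(j,m_0)\ge U(j,m_0)-2^{-m_0}\ge s_j-2^{-m_0}>x-\delta$, and symmetrically $U(j,m_0)\le s_j+2^{-m_0}<x+\delta$. So the witness $m=m_0$ satisfies the matrix of $\mathrm{In}$. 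Since $\mathrm{In}_n$ only asks for some $m\le n$, the predicate $\mathrm{In}_n(j;x,\delta)$ holds for every $n\ge m_0$; monotonicity in $n$ is what yields ``for all sufficiently large $n$''.

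I do not expect a genuine obstacle. The only point requiring care is the last step: the strictness of the inclusion is what provides the positive margin $\gamma$, and the interval width bound in \eqref{eq:cauchy-bounds} converts this margin into a finite certifying precision $m_0$. I will also note that $m_0$ depends on $s_j$ and is not computable from $(j,x,\delta)$ in general. This is harmless, because the lemma asserts only eventual truth and not a computable bound on $n$.
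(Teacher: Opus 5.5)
Your proposal is correct and follows essentially the same route as the paper: decidability comes from a bounded existential over $m\le n$ with exact rational comparisons, total computability comes from bounded search over $j\le k$, and the ``moreover'' clause turns the positive margin from strict inclusion into a certifying precision $m$ with $2^{-m}$ below that margin, holding for every $n\ge m$. Your explicit inequality chain and your remark that $m_0$ need not be computable are accurate additions that the paper leaves implicit.
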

A complementary eventual-exclusion statement is recorded in Appendix~\ref{app:identifier}, Lemma~\ref{lem:eventual-certification}.

\begin{proof}
Decidability of $\mathrm{In}_n$ is immediate since it is a finite existential quantifier over $m\le n$ and all
quantities are rational.
Total computability of $i^{(n)}_{S,k}$ follows by finite search over $1\le j\le k$ and returning the least $j$
satisfying $\mathrm{In}_n(j;x,\delta)$, or $0$ if none does.
For the final claim, strict inclusion gives $\varepsilon>0$ with
$x-\delta+\varepsilon \le s_j \le x+\delta-\varepsilon$.
Choosing $m$ with $2^{-m}<\varepsilon$ and using \eqref{eq:cauchy-bounds}, one has
$x-\delta < L(j,m)$ and $U(j,m)<x+\delta$, hence $\mathrm{In}_n(j;x,\delta)$ holds for all $n\ge m$.
\end{proof}

Under the above effective presentation assumptions on $S=\{s_1,s_2,\dots\}\subseteq\mathbb{R}$,
the subroutine
\[
i^{(n)}_{S,k}(x,\delta)\;=\;\min\{\,1\le j\le k:\ \mathrm{In}_n(j;x,\delta)\,\}
\]
is computable (uniformly in $n,k,x,\delta$), and hence Cover's sequential identifier on $S$
admits a computable implementation.
A detailed proof is given in the Appendix.
Consequently, in what follows, we may treat Cover's identifier on such sets $S$ as a computable procedure, and therefore the same $\Delta^0_2$ characterization of finite-error sequential
membership tests extends verbatim to all subsets $A\subseteq S$.

\begin{theorem}[Countable $S$ and $\Delta^0_2$ labeling]\label{thm:generalS}
Let $S=\{s_1,s_2,\dots\}\subseteq\R$ be countable and presented by computable Cauchy bounds
$L,U:\N\times\N\to\Q$ satisfying \eqref{eq:cauchy-bounds}.
Fix a set $A\subseteq S$ and its index set
\[
I_A:=\{\,j\in\N:\ s_j\in A\,\}\subseteq\N.
\]
Assume that $I_A\in\Delta^0_2$, witnessed by a total computable approximation
$a:\N\times\N\to\{0,1\}$ such that $\mathbf{1}_{I_A}(j)=\lim_{n\to\infty} a(j,n)$ for all $j$.

Assume further that there exists a computable sequential rule
\[
C:\Q^{<\omega}\to\{0,1,2,\dots\}
\]
such that for i.i.d.\ observations with finite second moment and mean $\mu$:
\begin{enumerate}
\item If $\mu\in S$, then with probability one there exist $j\ge 1$ with $s_j=\mu$ and $N$ such that
$C(X_1,\dots,X_n)=j$ for all $n\ge N$.
\item If $\mu\notin S$, then for all $\mu\in(\R\setminus S)\setminus N_0$ (where $N_0\subseteq \R\setminus S$ is
Lebesgue-null), with probability one there exists $N$ such that $C(X_1,\dots,X_n)=0$ for all $n\ge N$.
\end{enumerate}
Then there exists a computable sequential test $F:\Q^{<\omega}\to\{0,1\}$ such that:
\begin{enumerate}
\item If $\mu\in S$, then $F(X_1,\dots,X_n)\to \mathbf{1}_A(\mu)$ almost surely.
\item If $\mu\in(\R\setminus S)\setminus N_0$, then $F(X_1,\dots,X_n)\to 0$ almost surely.
\end{enumerate}
\end{theorem}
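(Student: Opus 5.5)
The plan is to copy the Shoenfield composition from the proof of Theorem~\ref{thm:main}, with the assumed identifier $C$ playing the role of Cover's procedure. Define
\[
F(x_1,\dots,x_n) \;:=\;
\begin{cases}
0, & C(x_1,\dots,x_n)=0,\\[2pt]
a\bigl(C(x_1,\dots,x_n),\,n\bigr), & C(x_1,\dots,x_n)\ge 1.
\end{cases}
\]
I will first check that $F$ is a computable map $\Q^{<\omega}\to\{0,1\}$. On input $(x_1,\dots,x_n)$ we run the total computable $C$, branch on whether its output is $0$, and otherwise evaluate the total computable $a$ at the pair $(C(x_1,\dots,x_n),n)$. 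Following Remark~\ref{rem:readout-convention}, the inputs are the rational readouts, so no real arithmetic is involved.

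Next comes the case $\mu\in S$. By hypothesis~1 on $C$, there is an almost-sure event $\Omega_1$ on which some $j$ with $s_j=\mu$ and some $N$ satisfy $C(X_1,\dots,X_n)=j$ for all $n\ge N$. On $\Omega_1$ we then have $F(X_1,\dots,X_n)=a(j,n)$ for $n\ge N$, and this tends to $\mathbf{1}_{I_A}(j)=\mathbf{1}_A(s_j)=\mathbf{1}_A(\mu)$.

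One subtlety needs attention here. The enumeration of $S$ need not be one-to-one, so $j$ may depend on the sample path, and different indices of the same point might seem able to disagree. They cannot: $I_A$ is defined as $\{j: s_j\in A\}$, so $\mathbf{1}_{I_A}(j)=\mathbf{1}_A(s_j)$ for every index $j$ of $\mu$. Hence whichever index the path stabilizes on gives the correct limit. The stabilization point $N$ is random, but that does not matter, because we only need pathwise convergence on $\Omega_1$.

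Finally comes the case $\mu\in(\R\setminus S)\setminus N_0$. By hypothesis~2, almost surely $C(X_1,\dots,X_n)=0$ for all large $n$, so $F(X_1,\dots,X_n)=0$ eventually. This is the correct answer, since $A\subseteq S$ means $\mathbf{1}_A(\mu)=0$.

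I expect no serious obstacle, since all the probabilistic content is carried by the hypotheses on $C$. The only points needing care are the two above: the well-definedness of the limit when $S$ is enumerated with repetitions, and the fact that $F$ consults $a$ at the current stage $n$ rather than at an independent clock. The latter is harmless because $n\to\infty$ along each path, so eventual constancy of $C$ combined with the stabilization of $a(j,\cdot)$ gives convergence.
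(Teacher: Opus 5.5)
Your proposal is correct and follows essentially the same route as the paper: the paper defines exactly this composition $F=a(C(x_1,\dots,x_n),n)$ (with $F=0$ when $C=0$), notes that it is computable, and runs the same two-case argument. Your point that a repeated enumeration of $S$ is harmless, because $\mathbf{1}_{I_A}(j)=\mathbf{1}_A(s_j)$ for every index $j$ of $\mu$, is correct and is left implicit in the paper.
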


\begin{proof}
Since $I_A$ is limit computable, there exists a total computable function $a:\N\times\N\to\{0,1\}$
such that for each $j\ge 1$, $\lim_{n\to\infty} a(j,n)=\mathbf{1}_{I_A}(j)$.
Define $F:\Q^{<\omega}\to\{0,1\}$ by
\[
F(X_1,\dots,X_n)=
\begin{cases}
a(C(X_1,\dots,X_n),\,n), & \text{if } C(X_1,\dots,X_n)\ge 1,\\
0, & \text{if } C(X_1,\dots,X_n)=0.
\end{cases}
\]
This $F$ is computable as a composition of computable maps.

If $\mu\in S$, then almost surely there exist $j\ge 1$ and $N$ such that $C(X_1,\dots,X_n)=j$ for all $n\ge N$.
Hence $F(X_1,\dots,X_n)=a(j,n)$ for all $n\ge N$, so
\[
\lim_{n\to\infty}F(X_1,\dots,X_n)=\lim_{n\to\infty}a(j,n)=\mathbf{1}_{I_A}(j)=\mathbf{1}_A(\mu)
\quad\text{almost surely.}
\]
If $\mu\in(\R\setminus S)\setminus N_0$, then almost surely $C(X_1,\dots,X_n)=0$ for all large $n$,
hence $F(X_1,\dots,X_n)=0$ eventually and $\lim_{n\to\infty}F(X_1,\dots,X_n)=0$ almost surely.
\end{proof}
\begin{remark}[Necessity on $S$]
If a computable sequential rule $F$ stabilizes correctly for every mean $\mu\in A\subseteq S$, then by evaluating $F$ on the
degenerate i.i.d.\ process $X_n\equiv s_j$ one obtains a computable approximation
$a(j,n)=F(s_j,\dots,s_j)$ converging to $\mathbf{1}_{I_A}(j)$.
By Shoenfield's limit lemma this forces $I_A$ to be  $\Delta^0_2$.
\end{remark}
Finally, we also note that if we restrict to effectively presented sets of reals, there is no need for the null set of failures, since the computable reals are countable. The proof is identical to the above discussion, ignoring the part $\mu\notin S$.
\section{Conclusion: finite-error inquiry and Popperian themes}\label{sec:conclusion}

Cover's theorem is striking because it separates two kinds of success one might demand of a
data-driven method.  If we insist on \emph{uniform} guarantees at a fixed sample size, even simple
questions about a mean quickly become impossible.  By contrast, if we permit \emph{revisions} and
demand only \emph{eventual correctness} that only finitely many wrong decisions occur almost surely
along an infinite data stream then remarkably strong forms of learnability reappear. This type of inference exists in our studies of physics, where theories might be refuted, given new evidence, but hopefully eventually become correct.

The main contribution of this paper is to locate the precise boundary of this phenomenon under computability constraints. Indeed, as far as we know, the physical Church-Turing thesis holds, and any sequence of physical experiments yields a sequence of computable outcomes.

For rational means, we give a complete characterization of which rational subsets (
$A\subseteq \Q$) admit a computable sequential membership procedure with only finitely many errors
almost surely.  More generally, for any countable hypothesis class of means equipped with an
effective presentation sufficient to implement Cover's identification rule, we obtain the same kind
of necessary and sufficient criterion.  In this sense, the results provide an \emph{optimal
generalization} of Cover's theorem: within the natural effectivity assumptions required to make the
procedure computable, we fully describe which hypotheses are amenable to finite-error sequential
testing.

The notion of finite-error success formalizes a style of reasoning common in both statistics and scientific practice: provisional commitment, followed by the willingness to retract and revise in
light of new evidence.  A procedure that is eventually correct is not required to be right quickly,
and it need not provide a certified stopping time at which it can announce finality.  Instead, it
models inquiry as a process whose \emph{trajectory} stabilizes: after some (random) point, the
procedure's verdict does not change again and coincides with the truth almost surely.

This perspective clarifies why countability and effectivity enter naturally.  When hypotheses are
countable and presented effectively, we can search among them in a manner compatible with
computation, and the data can asymptotically ``select'' the correct hypothesis.  Without an effective
presentation, even if the hypothesis class is countable, there need not exist a
computable mechanism that can exploit that countability.

\subsection{Popper, falsification, and convergence to the truth}
Popper emphasized that scientific theories are not verified but are subjected to severe tests and
may be \emph{falsified}.  A sequential procedure that is allowed to change its mind fits naturally
into this picture: early conjectures can be refuted by accumulating evidence, leading to revised
conjectures, and so on.  What our results add is a sharp sense in which \emph{falsification-driven
revision can be made effective} in a probabilistic setting: for exactly those hypothesis sets
identified by our characterization, there exists a computable method that will eventually cease to
be falsified, in the sense that it will make only finitely many incorrect commitments almost surely.

At the same time, the theorems delineate a principled limitation on Popperian optimism.  Even when
one relaxes demands to permit infinitely many revisions in principle (while requiring only finitely
many errors almost surely), not every property of the mean is learnable by a computable sequential
method.  Thus, the slogan ``science converges to the truth'' becomes a mathematically constrained
claim: convergence is attainable, but only for those targets lying on the right side of the boundary
identified here, and only under the effectivity conditions that make the hypothesis class accessible
to computation.
One interesting extension is to characterize other statistical estimation and decision problems under the same guiding question: which hypotheses admit \emph{computable} inquiry that is allowed to err only finitely often?
The philosophical moral remains the same: permitting revision expands what can be learned, but computability
and representation impose sharp, informative limits.
\section{Use of AI}
OpenAI's ChatGPT5.4 thinking mode was used as a writing aid to suggest alternative phrasing, improve exposition,
and assist with LaTeX editing and consistency checks. All definitions, theorems, proofs, and bibliographic choices were developed and verified by the author(s), who take full responsibility for the correctness and originality of the results. The language model also wrote a critical review of the final version. This was used to improve the quality of the final version.

\bibliographystyle{ieeetr}
\bibliography{references}
\appendix
\section{Uniform presentations and computable identifiers for countable mean classes}\label{app:identifier}

This appendix isolates the effectivity conditions required to implement Cover's countable-mean identification rule when the hypothesis class $S=\{s_1,s_2,\dots\}\subseteq \R$ is given effectively. It uses the definitions of effective presentation,

\subsection{Uniform Cauchy bounds and certified inclusion}

We work with the presentation already used in \eqref{eq:cauchy-bounds}: total computable $L,U:\N\times\N\to\Q$
satisfying for all $j,m\ge 1$,
\[
L(j,m)\le s_j\le U(j,m)
\quad\text{and}\quad
U(j,m)-L(j,m)\le 2^{-m}.
\]
(Optionally, one may enforce nestedness by intersecting successive intervals.)

For rationals $x$ and $\delta>0$, define the stage-$n$ certified inclusion predicate
\begin{equation}
\label{def:In}
\mathrm{In}_n(j;x,\delta)\ :\Longleftrightarrow\ \exists m\le n\ \bigl(x-\delta<L(j,m)\ \wedge\ U(j,m)<x+\delta\bigr),
\end{equation}
as in Section~7.

Given $k,n\ge 1$, define the bounded least-index operator
\[
i^{(n)}_{S,k}(x,\delta)=\min\{1\le j\le k:\ \mathrm{In}_n(j;x,\delta)\},
\]
with the convention that it is $0$ if the set is empty.

\begin{lemma}[Effectivity of the least-index step]\label{lem:least-index-effective}
For each fixed $k,n\ge 1$, the predicate $\mathrm{In}_n(j;x,\delta)$ is decidable uniformly in
$(j,x,\delta)$ on rational inputs, and the map $i^{(n)}_{S,k}(x,\delta)\in\{0,1,\dots,k\}$ is total
computable uniformly in $(k,n,x,\delta)$.
\end{lemma}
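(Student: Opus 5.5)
The plan is to show that both objects are built by finitely many primitive operations on rationals, each uniformly computable from the parameters. The argument mirrors the proof of Lemma~\ref{lem:iSk-computable}, but now tracks uniformity in all the parameters.

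\emph{Step 1 (the predicate).} Fix rational inputs $x$ and $\delta>0$ and indices $j,n$. For each $m\le n$ we compute the rationals $L(j,m)$ and $U(j,m)$. This is possible because $L,U$ are total computable, uniformly in $(j,m)$. We also compute the rationals $x-\delta$ and $x+\delta$. Whether $x-\delta<L(j,m)$ and $U(j,m)<x+\delta$ then reduces to deciding strict order between two rationals. That is decidable: bring them to a common denominator and compare the integer numerators.

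The existential quantifier in \eqref{def:In} ranges only over the finite set $\{1,\dots,n\}$. So a single algorithm decides $\mathrm{In}_n(j;x,\delta)$: loop over $m=1,\dots,n$, and accept iff some $m$ passes both comparisons. This algorithm takes $(j,n,x,\delta)$ as input, with rationals coded as pairs of integers, and always halts after at most $2n$ comparisons. Hence the predicate is decidable uniformly in $(j,x,\delta)$, and in fact also in $n$.

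\emph{Step 2 (the least-index map).} Given $(k,n,x,\delta)$, we run the Step~1 decision procedure for $j=1,2,\dots,k$ in increasing order. We output the first $j$ for which it accepts, or $0$ if none accepts. This is a bounded search of length $k$ and each call halts, so the procedure is total. Its output is exactly $\min\{1\le j\le k:\mathrm{In}_n(j;x,\delta)\}$, with the stated convention $0$ for the empty set, and it lies in $\{0,1,\dots,k\}$. Since the algorithm is one fixed program taking $(k,n,x,\delta)$ as input, the map is computable uniformly in all four parameters.

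\emph{Main point.} There is no real obstacle, but the point requiring care is that nothing here tests membership of the real number $s_j$ in an interval. Such a test would be undecidable at the boundary. We only compare the rational bounds $L(j,m),U(j,m)$ against rational endpoints, and the bounded quantifier $m\le n$ replaces the unbounded $\exists m$ of $\mathrm{In}$, which is merely $\Sigma^0_1$. I would add a remark to this effect, noting that this bounding is exactly what makes the operator total, in contrast to the unbounded search in \eqref{eq:i}.
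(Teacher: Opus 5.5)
Your proposal is correct and is the same argument as the paper's: $\mathrm{In}_n$ is decided by a bounded existential over $m\le n$ of rational comparisons, and $i^{(n)}_{S,k}$ by a bounded search over $j\le k$ that returns $0$ if nothing is found; your version just makes the uniformity explicit, including in $n$. One small caveat on your closing remark: the unbounded search in \eqref{eq:i} is itself total, by density of $\Q$ and decidability of $|q_i-t|<\delta$ for rational $t,\delta$, so the real contrast is with the merely semi-decidable unbounded predicate $\mathrm{In}$ (and with a possibly empty candidate set for general $S$), not with any failure of totality in \eqref{eq:i}.
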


\begin{proof}
Decidability of $\mathrm{In}_n$ is immediate: it is a finite existential quantifier over $m\le n$ and
all comparisons are between rationals. Then $i^{(n)}_{S,k}$ is computed by finite search over
$j=1,\dots,k$ and returning the least $j$ satisfying $\mathrm{In}_n$, or $0$ if none does.
\end{proof}

\begin{lemma}[Eventual correctness under strict inclusion]\label{lem:eventual-certification}
Fix $j\ge 1$, $x\in\R$, and $\delta>0$.
\begin{enumerate}
\item If $s_j\in(x-\delta,x+\delta)$, then $\mathrm{In}_n(j;x,\delta)$ holds for all sufficiently large $n$.
\item If $s_j\notin[x-\delta,x+\delta]$, then $\mathrm{In}_n(j;x,\delta)$ fails for all sufficiently large $n$.
\end{enumerate}
\end{lemma}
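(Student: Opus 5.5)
Both parts follow from the Cauchy bounds \eqref{eq:cauchy-bounds} and the fact that the existential quantifier in $\mathrm{In}_n$ ranges over $m\le n$. The predicate $\mathrm{In}_n(j;x,\delta)$ is therefore monotone in $n$: once a witness $m$ appears, it remains admissible for every larger stage. So in part 1 it suffices to find one witness $m$; in part 2 it suffices to show that no $m$ whatsoever is a witness.

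For part 1, I would repeat the argument from the proof of Lemma~\ref{lem:iSk-computable}. Strict inclusion $s_j\in(x-\delta,x+\delta)$ gives $\varepsilon>0$ with $x-\delta+\varepsilon\le s_j\le x+\delta-\varepsilon$. Choose $m$ with $2^{-m}<\varepsilon$. Since $L(j,m)\le s_j\le U(j,m)$ and $U(j,m)-L(j,m)\le 2^{-m}$, we get
\[
L(j,m)\ge s_j-2^{-m}>x-\delta
\quad\text{and}\quad
U(j,m)\le s_j+2^{-m}<x+\delta .
\]
Hence $m$ is a witness, and $\mathrm{In}_n(j;x,\delta)$ holds for every $n\ge m$. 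Nothing in this step requires $x$ or $\delta$ to be rational: the comparisons with the rationals $L(j,m)$ and $U(j,m)$ make sense for real $x,\delta$. Only the decidability claim of Lemma~\ref{lem:least-index-effective} needs rational inputs.

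For part 2, I would show that $\mathrm{In}_n(j;x,\delta)$ fails at every stage $n$, which is stronger than failing for all large $n$. Suppose $\mathrm{In}_n(j;x,\delta)$ held for some $n$, with witness $m\le n$. Then
\[
x-\delta<L(j,m)\le s_j\le U(j,m)<x+\delta ,
\]
so $s_j\in(x-\delta,x+\delta)\subseteq[x-\delta,x+\delta]$, contradicting the hypothesis. This is just the soundness remark made after the definition of $\mathrm{In}$. Because of it, the hypothesis $s_j\notin[x-\delta,x+\delta]$ is stronger than needed; $s_j\notin(x-\delta,x+\delta)$ already suffices.

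I do not expect a real obstacle in either part. The only delicate point is the boundary case $s_j=x\pm\delta$, which neither part needs to decide: part 1 excludes it by strict inclusion, and part 2 excludes it by the closed-interval hypothesis. The proof should note this explicitly, because in the identifier analysis the boundary case is handled probabilistically (a given rational lands exactly on an endpoint only with probability zero), not by this lemma.
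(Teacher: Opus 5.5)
Your proof is correct. Part~1 is the paper's argument; you only make explicit the step $L(j,m)\ge U(j,m)-2^{-m}\ge s_j-2^{-m}$. Part~2 takes a different and tighter route.

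The paper uses the positive distance from $s_j$ to $[x-\delta,x+\delta]$ to argue that for all \emph{large} $m$ the interval $[L(j,m),U(j,m)]$ lies outside $(x-\delta,x+\delta)$. It then concludes that $\mathrm{In}_n$ eventually fails. Taken literally, this only rules out large witnesses. As you observe, $\mathrm{In}_n$ is an existential over $m\le n$ and hence nondecreasing in $n$. So eventual failure requires that \emph{no} $m$ is a witness: a single small-$m$ witness would make $\mathrm{In}_n$ hold forever.

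Your soundness argument supplies exactly this: $x-\delta<L(j,m)\le s_j\le U(j,m)<x+\delta$ is impossible for every $m$. It gives the stronger conclusion that $\mathrm{In}_n$ fails at every stage. It also shows the hypothesis can be weakened to $s_j\notin(x-\delta,x+\delta)$. The positive-distance margin the paper relies on is not needed at all.

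One small inconsistency: your closing remark says the boundary case $s_j=x\pm\delta$ is left undecided, but given that weakening this is unnecessary. Your own argument shows $\mathrm{In}_n$ fails at every stage there too. So for every fixed $(j,x,\delta)$, the limit of $\mathrm{In}_n(j;x,\delta)$ is exactly the truth value of $s_j\in(x-\delta,x+\delta)$, with no exceptional configuration.
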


\begin{proof}
In \eqref{eq:delta}, strict inclusion gives $\varepsilon>0$ with $x-\delta+\varepsilon\le s_j\le x+\delta-\varepsilon$.
Choose $m$ so that $2^{-m}<\varepsilon$. Then $L(j,m)>x-\delta$ and $U(j,m)<x+\delta$, hence
$\mathrm{In}_n(j;x,\delta)$ holds for all $n\ge m$.
In \eqref{eq:lil-inclusion}, the distance from $s_j$ to the closed interval $[x-\delta,x+\delta]$ is positive, so for all
large $m$ the approximation interval $[L(j,m),U(j,m)]$ lies outside $(x-\delta,x+\delta)$ and cannot
be contained in it; hence $\mathrm{In}_n$ eventually fails.
\end{proof}

\subsection{From Cover's rule to a computable identifier}

In the body of the paper, Cover's specialization to $\Q$ uses the least-index function
$i(t,\delta)=\min\{i:|q_i-t|<\delta\}$. For a general $S$ with presentation $(L,U)$, the effective
replacement is the bounded operator $i^{(n)}_{S,k}(x,\delta)$ above.

To incorporate finite-precision readouts from Section~5, we use the inflated tolerance
$\delta_n'=\delta_n+\eta_n$ (cf.~\eqref{eq:delta-prime}), so that mean comparisons remain correct under the deterministic
error bound \eqref{eq:mean-bound-fixed}).

\begin{definition}[Computable identifier for $S$]\label{def:computable-identifier}
A computable identifier for $S$ is a sequence of computable maps
$C_n:\Q^n\to\{0,1,2,\dots\}$ such that there exists a Lebesgue-null set $N_0\subseteq\R\setminus S$
with:
\begin{enumerate}
\item If $\mu=s_j\in S$, then $C_n(\tilde X_1,\dots,\tilde X_n)=j$ for all large $n$, almost surely.
\item If $\mu\in(\R\setminus S)\setminus N_0$, then $C_n(\tilde X_1,\dots,\tilde X_n)=0$ for all large $n$,
almost surely.
\end{enumerate}
\end{definition}

\begin{theorem}[Effective implementation of Cover's identifier]\label{thm:identifier-exists}
Let $S=\{s_1,s_2,\dots\}\subseteq\R$ be countable and presented by computable Cauchy bounds $(L,U)$.
Fix schedules $(n(j))_{j\ge 1}$ and $(k_j)_{j\ge 1}$ as in Cover's construction, and define $\delta_n$
by \eqref{eq:delta} and $\delta_n'=\delta_n+\eta_n$ by \eqref{eq:delta}.
Assume the corresponding Borel--Cantelli summability condition holds with $\delta'$ (equivalently,
$\sum_j k_j\,\delta'_{n(j)}<\infty$, up to constants as in \eqref{eq:borel-cantelli-sum}.
Define at decision times $n(j)$:
\[
C_{n(j)}(\tilde X_1,\dots,\tilde X_{n(j)})
:= i^{(m(j))}_{S,k_j}\!\bigl(\overline{\tilde X}_{n(j)},\,\delta'_{n(j)}\bigr),
\]
for any computable schedule $m(j)\to\infty$ (bounding the search depth in $\mathrm{In}_{m(j)}$), and keep $C_n$ constant between decision times.
Then $(C_n)$ is a computable identifier for $S$ (Definition~\ref{def:computable-identifier}).
\end{theorem}

\begin{proof}[Proof sketch]
Computability follows from Lemma~\ref{lem:least-index-effective} since all inputs are rational.
For correctness, Cover's probabilistic argument (LIL control plus Borel--Cantelli using \eqref{eq:borel-cantelli-sum}
shows that outside a Lebesgue-null exceptional set $N_0\subseteq\R\setminus S$,
the (ideal) least-index choice based on the true mean and true inclusion in shrinking open intervals
stabilizes as required. Replacing $\bar X_n$ by the readout mean $\overline{\tilde X}_n$ is harmless
once $|\overline{\tilde X}_n-\bar X_n|\le\eta_n$, by the inflation $\delta'\!=\delta+\eta$.
Finally, replacing true inclusion by certified inclusion does not change the eventual least index
whenever the target point lies strictly inside the relevant interval; this is exactly
Lemma~\ref{lem:eventual-certification}. Thus the effective selector agrees with the ideal selector
for all sufficiently large decision times on a probability-one set, yielding stabilization to the
correct index in $S$ and to $0$ off $S$ outside $N_0$.
\end{proof}

\begin{remark}
Theorem~4 assumes the existence of a computable identifier $C$ for $S$.
Theorem~\ref{thm:identifier-exists} provides such a $C$ whenever $S$ is uniformly presented by
computable Cauchy bounds \eqref{eq:cauchy-bounds}, i.e.\ exactly the setting in which Cover's ``least index'' step can be
implemented effectively.
\end{remark}

\end{document}